\documentclass[12pt]{amsart}
\usepackage[utf8]{inputenc}
\usepackage[T1]{fontenc}
\DeclareMathAlphabet{\mathpzc}{OT1}{pzc}{m}{it}
\usepackage{geometry}\geometry{margin=1.25in}
\usepackage{comment}
\usepackage{hyperref}

\usepackage{subcaption}
\usepackage{graphicx}
\usepackage[export]{adjustbox}
\usepackage{amsfonts}
\usepackage{amssymb}
\usepackage{amsthm}
\usepackage{amsmath}
\usepackage{amscd}
\usepackage[shortlabels]{enumitem}
\usepackage{mathrsfs}
\usepackage{tikz}
\usetikzlibrary{calc,arrows,decorations.pathreplacing}
\usepackage{nicefrac, xfrac}
\usepackage{mathtools,xparse}

\setlength{\topmargin}{0in} \setlength{\oddsidemargin}{0in}
\setlength{\evensidemargin}{0in} \setlength{\textwidth}{6.4in}
\setlength{\textheight}{8.6in}

\theoremstyle{plain}

\newtheorem{theorem}{Theorem}[section]

\newtheorem{proposition}[theorem]{Proposition}

\theoremstyle{definition}
\newtheorem{definition}[theorem]{Definition}

\newtheorem{remark}[theorem]{Remark}

\newtheorem*{theorem*}{Theorem}

\renewcommand{\phi}{\varphi}
\renewcommand{\epsilon}{\varepsilon}


\newcommand{\vertiii}[1]{{\left\vert\kern-0.25ex\left\vert\kern-0.25ex\left\vert #1
		\right\vert\kern-0.25ex\right\vert\kern-0.25ex\right\vert}}




\DeclareSymbolFont{bbold}{U}{bbold}{m}{n}
\DeclareSymbolFontAlphabet{\mathbbold}{bbold}

%
%

\newcommand{\NN}{\ensuremath{\mathbb N}}

\newcommand{\RR}{\ensuremath{\mathbb R}}



\def\N{\mathbb{N}}

\date{\today}

\title{Dynamical properties of chimera states for globally coupled map lattices}

\author{Th\'eophile Caby}
\address{CMUP, Departamento de Matem\`atica, Faculdade de C\^iencias, Universidade do Porto,
Rua do Campo Alegre s/n, 4169007 Porto, Portugal.}
\email{caby.theo@gmail.com} 
\author{Pierre Guiraud}
\address{Instituto de Ingeniería Matemática, CIMFAV, Universidad de Valparaiso, Valparaíso, Chile.}
\email{pierre.guiraud@uv.cl}

\begin{document}
\date{}
\maketitle
\begin{abstract}
We study the stability properties and long-term dynamical behavior of chimera states in globally coupled map lattices. In particular, we give a formula for the transverse Lyapunov exponent associated with blocks of synchronized sites. We use these results to study clustered dynamics from a numerical perspective, and give numerical evidence of attracting chimeras having chaotic dynamics, as well as periodic behaviors. Finally, we obtain some results ruling out the existence of absolutely continuous invariant measures supported on chimera states in strong coupling  regimes.
\end{abstract}

\section{Introduction}
Different definitions for chimera states are proposed in the literature on coupled map lattices (CML). They are generally described as the emergence of a block of synchronized sites that persists with the dynamics, while the rest of the sites are not synchronized with this block. Such dynamical patterns have been experimentally observed in real-world systems of similar interacting units, such as networks of chemical and mechanical oscillators \cite{exp1,exp2}. Some theoretical chimeras have been constructed in different systems of non-locally coupled oscillators \cite{stro,bick}, and are usually associated with a breaking of symmetry caused by the non-local nature of the coupling. For CMLs with important symmetries in the coupling structure, such as the one considered in this paper, chimeras were, until recently, widely believed to be impossible to arise from a generic initial configuration \cite{stro}. However, in a recent publication \cite{cosenza}, numerical evidence of the emergence of attracting chimera states for a globally coupled map lattice was presented.

The goal of this paper is to provide analytical and numerical tools to study  the eventual formation and persistence of chimeras states in globally coupled map lattices.
We start by introducing the system and proposing a  definition of cluster and chimera states.
We are then interesting in the stability  of these states. The important symmetries of the system allow us to decompose the tangent space in a convenient way, which renders an analytical treatment possible. In particular, it allows computing explicitly the transverse Lyapunov exponent associated with synchronizing blocks of arbitrary sizes. We then take advantage of these results to study them from a numerical perspective. This will lead us to present several examples of chimeras attracting a large set of initial conditions. Finally, in the last section, we study the long-term behavior of chimeric dynamics and exhibit regimes for which chimeric dynamics cannot preserve an absolutely continuous invariant measure. Overall, our results suggest that, for such globally coupled CMLs, attracting chimera states tend to evolve towards sets of zero Lebesgue measures, such as periodic cycles or hyperbolic attractors.

\section{Presentation of the problem}
We consider a coupled map lattice of finite size with a global coupling. The phase space $X$ of such dynamical systems is the direct product of $N\geq 2$ copies of a real interval $I$, that is, $X=I^N$. An orbit 
$\{x^t\}_{t\in\N}\in X^{\N}$ is given by    
\begin{equation}\label{CML}
x_i^{t+1}=(1-\varepsilon)f(x_i^t)+\frac{\varepsilon}{N} \sum_{j=1}^N f(x_j^t) \quad\forall i\in\{1,2,\dots, N\},
\end{equation}
where $f:I\to I$ is a real map called the local dynamics and $\varepsilon\in[0,1]$ is a parameter standing for the coupling intensity. The orbits can also be obtained by iterations of the map $F_\varepsilon:X \to X$ defined by  
\[
F_\varepsilon (x)=C_\varepsilon \circ F_0(x),
\]
where $F_0(x)=(f(x_1),f(x_2),\dots,f(x_N))$ and $C_\epsilon$ is the linear coupling operator whose associated matrix is

\begin{equation*}
\begin{pmatrix}
1-\varepsilon+\frac{\varepsilon}{N} & \frac{\varepsilon}N &\dots & \dots & \frac{\varepsilon}N \\

\frac{\varepsilon}N & 1-\varepsilon+\frac{\varepsilon}{N} & \frac{\varepsilon}N & \dots& \frac{\varepsilon}N\\

\vdots &  & \ddots  &  & \vdots \\

\vdots &  &  & \ddots & \vdots&\\

\frac{\varepsilon}N & \dots & \dots & \frac{\varepsilon}N & 1-\varepsilon+\frac{\varepsilon}{N} \\
\end{pmatrix}.
\end{equation*}


\begin{definition} 
\noindent 1) We say that  $x\in X$  is a cluster state if it belongs to a cluster space
\[
\mathcal{C}_J:=\{x\in X : x_i=x_{j}, \forall i,j\in J\},
\]
for some $J\subset\{1,\dots,N\}$ such that $\#J>1$.\\ 

\noindent 2) We say that $\chi_J$ is a chimera space of size $2\leq k\leq N-1$, if  there exists $J$ such that $\#J=k$ and $\chi_J:=\mathcal{C}_J\setminus\mathcal{C}_{J'}$
for any $J'$ such that $\#J'>\#J$. A point $x\in\chi_J$ is called a chimera state.\\
\noindent 3) If $x$ belongs to the diagonal $\mathcal{D}:=\mathcal{C}_{\{1,2,\dots,N\}}$, we say that $x$ is fully synchronized.
\end{definition}

In other words, $x\in \mathcal{C}_J$ if at least 
the sites labeled by $J$ are synchronized. On the other hand, $x\in \chi_J$ if the sites labeled by $J$ are synchronized, but no other sites are synchronized with those of $J$. Note that in our definition, the diagonal is not considered a chimera space ($k<N$). We can check that any cluster space $\mathcal{C}_J$ is forward invariant by the dynamics. Therefore, once the sites labeled by the elements of $J$ are synchronized, they remain synchronized forever. However, if $x\in\mathcal{C}_J$, it can be mapped in $\mathcal{C}_{J'}$, where $J\subset J'$, and more sites can eventually synchronize with those of $J$. It follows that a chimera state can evolve toward a chimera state of larger size, or even a completely synchronized state. Thus, a chimera space is not necessarily forward invariant. In what follows, we will call a chimeric orbit, an orbit contained in $\chi_J$, for some $J\subset \{1,...,N \}$. For such an orbit, the sites labelled by $J$
are always synchronized, while the remaining sites never get synchronized with those of $J$.

In this paper, we are interested in the existence of a chimera space attracting a large set of initial conditions in $X$. To be more precise, we denote $\omega(x)$ the $\omega$-limit set of a point $x\in X$, that is, the set of the accumulation points of the orbit $\{F^n_\varepsilon(x)\}_{n\in\N}$, and denote
\[
\mathcal{B}(\chi_J):=\{x\in X : \omega(x)\subset \chi_J\}
\]
the basin of attraction of $\chi_J$.
We ask for the existence of a set $J\subset\{1,\dots N\}$ such that  $\mathcal{B}(\chi_J)$
has positive Lebesgue measure, for some values of the parameters of the system \eqref{CML}. In such a case, there is a positive probability for a random initial condition to be attracted to $\chi_J$. In particular, we may be able to observe the asymptotic synchronization of the sites of the  $J$ (and only those of $J$) in a numerical experiment. We note that a sufficient condition for $\mathcal{B}(\chi_J)$ having  positive Lebesgue measure is the existence of a Milnor attractor in $\chi_J$.

\section{Transverse Lyapunov exponent in cluster spaces}\label{stab}

Transverse Lyapunov exponents are a tool to assess the stability of invariant sets. Since chimera spaces are not necessarily invariant by the dynamics, we will instead introduce and compute Lyapunov exponents transverse to a cluster space $\mathcal{C}_J$ that contains the chimera space $\chi_J$. We start by noticing that, given the symmetries of the system \eqref{CML}, the study of the properties of the dynamics in any cluster space $\mathcal{C}_J$ can be done without loss of generality  in the cluster space
\[
\mathcal{C}_k:=\{x\in X : x_1=x_{2}=\dots=x_{k}\}, 
\]
where $k=\#J\in\{2,\dots, N\}$.

The Lyapunov exponent of $x\in \mathcal{C}_k$ in the direction $\xi\in\RR^N\setminus\{0\}$ is defined as
\begin{equation}\label{ple}
\Lambda(x,\xi):=\lim_{n\to\infty}\frac{1}{n}\log\frac{||D_{x}F^n_\varepsilon\xi||}{||\xi||},
\end{equation}
where $D_xF^n_\varepsilon$ is the Jacobian matrix of $F^n_\varepsilon$ at the point $x$. If there exists an ergodic invariant measure $\nu$ with support contained in $\mathcal{C}_k$, then the limit exists for $\nu$-almost all $x\in\mathcal{C}_k$ and any $\xi\in\RR^N$. Moreover, this limit can take at most $N$ different values as $\xi$ varies in $\RR^N$.  
Here, we are mostly interested in the Lyapunov exponents in the directions transverse to the cluster space. So let us decompose the vector space $\mathbb{R}^N$ in order to identify the directions orthogonal to $\mathcal{C}_k$. The set $\mathcal{C}_k$ can be seen as a subset of the vector space
\[ 
V_k^{\parallel}:=\text{span}(v^{k},v^{k+1},\dots,v^N), 
\]
spanned by the $N-k+1$ orthogonal vectors $v^{k},v^{k+1},\dots,v^N$ of $\mathbb{R}^N$, where $v^k$ is defined by
\[
v_1^k=v_2^k=\dots=v_k^k=1\quad\text{and}\quad v_i^k=0\quad\forall i\in\{k+1,\dots,N\},
\]
and $v^l$ is defined for any $l\in\{k+1,\dots,N\}$ by 
\[
v^l_l=1 \quad\text{and}\quad v_i^l=0\quad\forall i\in\{1,\dots,N\}\setminus\{l\}.
\]
 Denoting $V^{\bot}_{k}$ the orthogonal complement of $V_k^{\parallel}$ in $\mathbb{R}^N$ with respect to the canonical inner product, we have
\begin{equation}\label{deco}
\mathbb{R}^N= V_k^{\parallel}\oplus V^{\bot}_{k}.
\end{equation}
For a vector $\xi\in\RR^N$, we denote by $\xi_\bot$ and by $\xi_\parallel$ be the projections of $\xi$ on $V^\bot_{k}$ and on $V_k^{\parallel}$, respectively.

\begin{definition}\label{tly}
Let $k\geq 2$, $\xi \in \mathbb{R}^N\setminus V_k^{\parallel} $ and $x\in\mathcal{C}_k$ be such that $F_\varepsilon$ is differentiable along the orbit of $x$. If the limit exists, 
\begin{equation}
\Lambda^\bot_k(x,\xi)=  \lim_{n\to \infty} \frac1n \log \frac{||(D_{x}F^n_\varepsilon\xi)_\bot||}{||\xi_\bot||},
\end{equation}
is called the transverse Lyapunov exponent (TLE) of $x\in \mathcal{C}_k$ in the direction $\xi$.
\end{definition}
The TLE measures the asymptotic exponential expansion rate of a vector in the directions orthogonal to $\mathcal{C}_k$ under the iterations of the linearized dynamics. We will see that, in our case, if $\Lambda^\bot_k(x,\xi)$ exists, it does not depend on the choice of $\xi\in \mathbb{R}^N\setminus V_k^{\parallel} $. So, we will simply write $\Lambda^\bot_k(x)$ and we have $\Lambda^\bot_k(x)=\Lambda(x,\xi)$ for all $x\in\mathcal{C}_k$ and $\xi\in V_k^\bot\setminus\{0\}$.

The limit defining the TLE may not always exist. In this case, we denote with $\Lambda^{\bot,-}_k(x)$ and $\Lambda^{\bot,+}_k(x)$, respectively, the inferior limit and superior limit. 

Now, we give the expression of the TLE of cluster states.

\begin{theorem}\label{the1}
Let $k\geq 2$ and  $x\in \mathcal{C}_k$ such that $F_\varepsilon$ is differentiable along the orbit $\{x^t\}_{t\in\N}$ of $x$. Then, if it exists,  
\begin{equation}\label{lam}
\Lambda^\bot_k(x)= \log(1-\varepsilon)+\lim_{n\to \infty} \frac1n \sum_{t=0}^{n-1}\log |f'(x_1^t)|.
\end{equation}
\end{theorem}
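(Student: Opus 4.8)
The plan is to exploit the factorization $F_\varepsilon = C_\varepsilon\circ F_0$ together with the splitting \eqref{deco}, using the fact that $C_\varepsilon$ is a rank-one perturbation of a multiple of the identity. Writing $\mathbf 1=(1,\dots,1)\in\RR^N$ and letting $P$ be the $N\times N$ matrix all of whose entries equal $1$ (so that $P\xi=\big(\sum_{j=1}^N\xi_j\big)\mathbf 1$), the coupling matrix is $C_\varepsilon=(1-\varepsilon)\,\mathrm{Id}+\tfrac{\varepsilon}{N}P$. Since $C_\varepsilon$ is linear and $F_0(x)=(f(x_1),\dots,f(x_N))$, the chain rule gives $D_xF_\varepsilon=C_\varepsilon\, D_xF_0$ with $D_xF_0=\mathrm{diag}(f'(x_1),\dots,f'(x_N))$.

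First I would describe $V_k^{\bot}$ explicitly: a vector $\xi$ is orthogonal to $v^l$ for every $l\in\{k+1,\dots,N\}$ and to $v^k$ exactly when $\xi_{k+1}=\dots=\xi_N=0$ and $\sum_{i=1}^k\xi_i=0$; in particular $\sum_{j=1}^N\xi_j=0$, so $P\xi=0$ for every $\xi\in V_k^{\bot}$. Next, for $x\in\mathcal{C}_k$ I claim that $D_xF_\varepsilon$ leaves both $V_k^{\parallel}$ and $V_k^{\bot}$ invariant and acts on $V_k^{\bot}$ as multiplication by the scalar $(1-\varepsilon)f'(x_1)$. Indeed, if $\xi\in V_k^{\bot}$ then $D_xF_0\xi$ again vanishes in coordinates $k+1,\dots,N$, and in coordinates $1,\dots,k$ it equals $f'(x_i)\xi_i=f'(x_1)\xi_i$, because $x_1=\dots=x_k$ forces $f'(x_i)=f'(x_1)$; hence $D_xF_0\xi=f'(x_1)\xi\in V_k^{\bot}$, so $P(D_xF_0\xi)=0$ and $D_xF_\varepsilon\xi=C_\varepsilon\big(f'(x_1)\xi\big)=(1-\varepsilon)f'(x_1)\xi$. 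Invariance of $V_k^{\parallel}$ is similar: a vector with equal first $k$ coordinates is mapped by $D_xF_0$ to another such vector (again because $f'(x_1)=\dots=f'(x_k)$), and adding a multiple of $\mathbf 1$ preserves this property. Thus the splitting \eqref{deco} is $D_xF_\varepsilon$-invariant for every $x\in\mathcal{C}_k$.

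Then I would iterate. Since every cluster space is forward invariant, the whole orbit $\{x^t\}_{t\in\N}$ of $x\in\mathcal{C}_k$ stays in $\mathcal{C}_k$, so $x_1^t=\dots=x_k^t$ for all $t$ and $D_{x^t}F_\varepsilon$ acts on $V_k^{\bot}$ as multiplication by $(1-\varepsilon)f'(x_1^t)$. By the chain rule $D_xF_\varepsilon^n=D_{x^{n-1}}F_\varepsilon\cdots D_{x^0}F_\varepsilon$, and since $V_k^{\bot}$ is invariant at each step, for any $\xi\in\RR^N$ the transverse component satisfies
\[
(D_xF_\varepsilon^n\xi)_\bot \;=\; D_xF_\varepsilon^n(\xi_\bot)\;=\;\Big(\prod_{t=0}^{n-1}(1-\varepsilon)f'(x_1^t)\Big)\,\xi_\bot .
\]
For $\xi\in\RR^N\setminus V_k^{\parallel}$ we have $\xi_\bot\neq 0$, hence, using $1-\varepsilon\ge 0$,
\[
\frac1n\log\frac{\|(D_xF_\varepsilon^n\xi)_\bot\|}{\|\xi_\bot\|}
=\log(1-\varepsilon)+\frac1n\sum_{t=0}^{n-1}\log|f'(x_1^t)| ,
\]
and letting $n\to\infty$ gives \eqref{lam}; since this identity holds for each $n$, the left-hand limit exists precisely when the right-hand one does, and the value is independent of the chosen $\xi$.

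There is no serious obstacle. The only point requiring care is the verification that the decomposition \eqref{deco} is $D_xF_\varepsilon$-invariant for $x\in\mathcal{C}_k$ and that the transverse block is exactly the scalar $(1-\varepsilon)f'(x_1)$ — this is where the rank-one structure of $C_\varepsilon$ and the coincidence $f'(x_1)=\dots=f'(x_k)$ on $\mathcal{C}_k$ enter — together with invoking forward invariance of $\mathcal{C}_k$, so that $f'(x_1^t)$ is defined along the entire orbit and the same scalar appears at every time step. One should also record the degenerate case $\varepsilon=1$, in which both sides of \eqref{lam} equal $-\infty$.
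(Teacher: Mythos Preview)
Your argument is correct and follows essentially the same route as the paper: both show that for $x\in\mathcal{C}_k$ the splitting $\RR^N=V_k^{\parallel}\oplus V_k^{\bot}$ is $D_xF_\varepsilon$-invariant with $D_xF_\varepsilon$ acting on $V_k^{\bot}$ as the scalar $(1-\varepsilon)f'(x_1)$, and then iterate along the orbit via the chain rule. The only cosmetic difference is that you obtain the scalar action on $V_k^{\bot}$ from the rank-one structure $C_\varepsilon=(1-\varepsilon)\mathrm{Id}+\tfrac{\varepsilon}{N}P$ together with $P\xi=0$ on $V_k^{\bot}$, whereas the paper exhibits an explicit eigenbasis $v^1,\dots,v^{k-1}$ of $V_k^{\bot}$.
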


\begin{proof} 
For any $x\in \mathcal{C}_k$ such that $F_\varepsilon$ is differentiable at $x$, we have that 
\begin{equation*}
D_{x}F_\varepsilon=\begin{pmatrix}
af'(x_1) & bf'(x_1) &\dots & bf'(x_1)& bf'(x_{k+1})& \dots & \dots &bf'(x_N) \\
bf'(x_1) & af'(x_1) &\ddots& \vdots  & \vdots  & \dots & \dots &\vdots \\
\vdots   & bf'(x_1) &\ddots& bf'(x_1)  & \vdots  & \dots & \dots &\vdots \\
\vdots   & \vdots   &\ddots& af'(x_1)& bf'(x_{k+1})& \dots & \dots &\vdots \\
\vdots   & \vdots   &\dots & bf'(x_1)& af'(x_{k+1})& \dots & \dots &\vdots \\

\vdots   & \vdots   &\dots & \vdots & bf'(x_{k+1})& \ddots & \dots &\vdots \\
\vdots   & \vdots   &\dots & \vdots & \vdots & \dots  & \ddots &bf'(x_N)\\
bf'(x_1) & bf'(x_1) &\dots & bf'(x_1)& bf'(x_{k+1})& \dots &\dots & af'(x_N) \\
\end{pmatrix},
\end{equation*}
where
\[
a=1-\varepsilon+\frac{\varepsilon}{N} \quad\text{and}\quad\quad b=\frac{\varepsilon}{N}.
\]

Let $v^1,v^2,\dots,v^{k-1}$ be  the $k-1$ vectors of 
$\mathbb{R}^N$ defined for every $l\in\{1,\dots,k-1\}$ by
\[
v_l^l=1,\ v_{l+1}^l=-1\quad\text{and}\quad v_i^l=0\quad\forall i\in\{1,\dots,N\}\setminus\{l,l+1\}.
\] 
Then,  one can check that $v^1,v^2,\dots,v^{k-1}$  are linearly independent eigenvectors of $D_{x}F_\varepsilon$ with eigenvalue 
\[
\lambda_x=f'(x_1)(1-\varepsilon).
\]
Moreover,  these vectors are orthogonal to  the vectors $v^k,v^{k+1},\dots,v^{N}$ that span $V^{\parallel}_{k}$. It follows that $v^1,v^2,\dots,v^N$ is a basis of $\mathbb{R}^N$ and that  
\begin{equation}
\text{span}(v^{1},\dots,v^{k-1}) = V^{\bot}_{k}.
\end{equation}
We note that both spaces $V^{\parallel}_{k}$ and $V^{\bot}_{k}$ do not depend on $x$ and are stable by the action of $D_{x}F_\varepsilon$ for any $x\in\mathcal{C}_k$. Then, if $x\in\mathcal{C}_k$, we have the following decomposition of $D_{x}F_\varepsilon\xi$ in $V_k^{\parallel}$ and $V^{\bot}_{k}$:
\[
(D_{x}F_\varepsilon\xi)_\parallel=D_{x}F_\varepsilon \xi_\parallel\qquad\text{and}\qquad 
(D_{x}F_\varepsilon\xi)_\bot=D_{x}F_\varepsilon\xi_\bot=\lambda_x\xi_\bot.
\]
 Now, if $\{x^t\}_{t\in\N}$ is the orbit of a point $x\in\mathcal{C}_k$, then $x^t$ belongs to $\mathcal{C}_k$ 
 for any $t\in\N$ and we have
\begin{equation}
\begin{aligned}
D_{x}F^n_\varepsilon\xi=\prod_{t=0}^{n-1}D_{x^{t}}F_\varepsilon(\xi_\parallel+\xi_\bot) &= 
\prod_{t=0}^{n-1}D_{x^{t}}F_\varepsilon\xi_\parallel+\prod_{t=0}^{n-1}D_{x^{t}}F_\varepsilon\xi_\bot,
\end{aligned}
\end{equation}
which implies that 

\begin{equation}
\begin{aligned}
(D_{x}F^n_\varepsilon\xi)_\bot=\xi_{\bot}\prod_{t=0}^{n-1} \lambda_{x^t}  =
(1-\varepsilon)^n  \xi_{\bot}\prod_{t=0}^{n-1} f'(x_1^t).
\end{aligned}
\end{equation}
Then, formula \eqref{lam} follows from Definition \ref{tly}.
\end{proof}

\begin{remark}\label{erg} If we introduce the projection $\pi_1:X\to I$ defined by $\pi_1(x)=x_1$, we obtain
\[
\Lambda^\bot_k(x)= \log(1-\varepsilon)+\lim_{n\to \infty} \frac1n \sum_{t=0}^{n-1}\log |f'\circ\pi_1 (F_\varepsilon^t(x))|\qquad\forall x\in\mathcal{C}_k.
\]
\noindent 1) If $x$ belongs to the diagonal $\mathcal{D}$ then we can check from \eqref{CML} that $x_1^{t}=f^t(x_1)$ for all $t\in\NN$. It follows that 
\[
\Lambda^\bot_k(x)= \log(1-\varepsilon)+\lim_{n\to \infty} \frac1n \sum_{t=0}^{n-1}\log |f'(f^t(x_1))|\qquad\forall x\in\mathcal{D}.
\]
As $x_1\in I$, if $f$ admits an ergodic invariant measure $\mu$ and $\log |f'|$ belongs to $L^1_\mu$, then for $\mu$ almost all $x_1\in I$, we have
\begin{equation}\label{tlediag}
\Lambda^\bot_k(x)= \log(1-\varepsilon)+\lambda_\mu
\qquad\text{where}\quad
\lambda_\mu=\int_I \log |f'(z)|d\mu(z)
\end{equation}

is the Lyapunov exponent of the map $f$ with respect to $\mu$.\\

\noindent 2) Now, if $x\in\mathcal{C}_k\setminus\mathcal{D}$, then $\pi_1(F_\varepsilon^t(x))$ is not anymore an orbit of the local map, but the TLE is given by a Birkhoff sum of the observable $\log |f'\circ\pi_1|$ along an orbit of $F_\varepsilon$. Therefore, the existence and the values of the TLE may now depend on the existence and the properties of ergodic invariant measures of $F_\varepsilon$ supported in $\mathcal{C}_k$.\\

\noindent 3) Note that if $\varepsilon>1-1/\sup|f'|$, then for every $k\in\{2,\dots, N\}$, we have $\Lambda_k^{\bot,+}(x)<0$ for any $x\in\mathcal{C}_k$ such that $F_\varepsilon$ is differentiable along the orbit of $x$. 
Conversely,  if $\varepsilon<1-1/\inf|f'|$, then for every $k\in\{2,\dots, N\}$, we have $\Lambda_k^{\bot,-}(x)>0$ for any $x$ satisfying the same hypothesis as above.
\end{remark}

As mentioned earlier,  we are interested in the existence of sets of positive Lebesgue measure attracted by chimera spaces. Since any cluster space $\mathcal{C}$ is forward invariant, the restriction $F_{\varepsilon|\mathcal{C}}:\mathcal{C}\to\mathcal{C}$ of $F_{\varepsilon}$ to $\mathcal{C}$ defines  a dynamical system with its own attractors. 
Suppose $A\subset\mathcal{C}$ is such an attractor. Then, a general result from Alexander et al. \cite{alexander} states that  if $A$ supports a strong SRB-measure and that the TLE computed with respect to this measure is negative, then $A$ is a Milnor attractor for the whole system $F_\varepsilon: X\to X$. An other general result, that can be found in \cite{ashwin}, is that if $A$ is asymptotically stable and the TLE is negative for any ergodic measures of the restrained dynamics, then $A$ is asymptotically stable for the whole system, under some suitable assumptions on the derivative of the map. Both results imply that $\mathcal{B}(A)$, and therefore $\mathcal{B}(\mathcal{C})$, has positive Lebesgue measure in the whole space. To sum up, if $F_{\varepsilon|\mathcal{C}}:\mathcal{C}\to\mathcal{C}$ and one of its attractor $A$ satisfy the  hypothesis mentioned above of \cite{alexander} or \cite{ashwin}, the existence of a set of positive Lebesgue measure attracted by the cluster space $\mathcal{C}$ can be proved studying the sign of the TLE associated with the different ergodic measures supported in $A$.

Unfortunately, for chaotic local maps the existence of ergodic invariant measures with support contained in cluster spaces (excluding the diagonal) is principally known for weak coupling, for which the TLE is  close to the (positive) Lyapunov exponent of the local map. Also, the explicit computation of the TLE in cluster spaces is challenging because it depends on the orbits of the CML and not only on those of the local dynamics as it is the case in the diagonal (see Remark \ref{erg}). Nevertheless, the results of \cite{alexander, ashwin} motivate the use of the TLE as an indicator to detect possible attracting chimeras. Also, formula \eqref{lam}, is quite suitable for numerical computation. This is what is done in the next section, where we perform a numerical exploration of the TLE and present numerical examples of attracting chimeras of different types. This suggests, that either the CML satisfies the different hypothesis of \cite{alexander} or \cite{ashwin}, or that these results could also extend to cover such CMLs.

\section{Numerical hunt for chimeras}

Let us consider the CML \eqref{CML} with local dynamics defined in $I=[0, 1]$ by the logistic map
\begin{equation}\label{logi}
f(x)=4x(1-x).
\end{equation}
This map admits a unique absolutely continuous invariant measure $\mu$ and a positive associated Lyapunov exponent $\lambda_\mu=\log 2$. We will perform our numerical study with this local map, but we expect qualitatively similar behavior for different kinds of local maps, in particular, other types of unimodal maps and tent maps that are topologically conjugated to $f$.

\subsection{Bifurcation diagrams and TLEs} 
Let us for now consider $N=3$ sites and study the existence of attracting chimeras spaces of sizes $k=2$. To that aim, we perform numerical simulations of the long-term dynamics of points in $\mathcal{C}_2$ (for which $x_1=x_2$) and estimate their associated TLE for a wide range of coupling values. 

\begin{figure}
\includegraphics[height=3in]{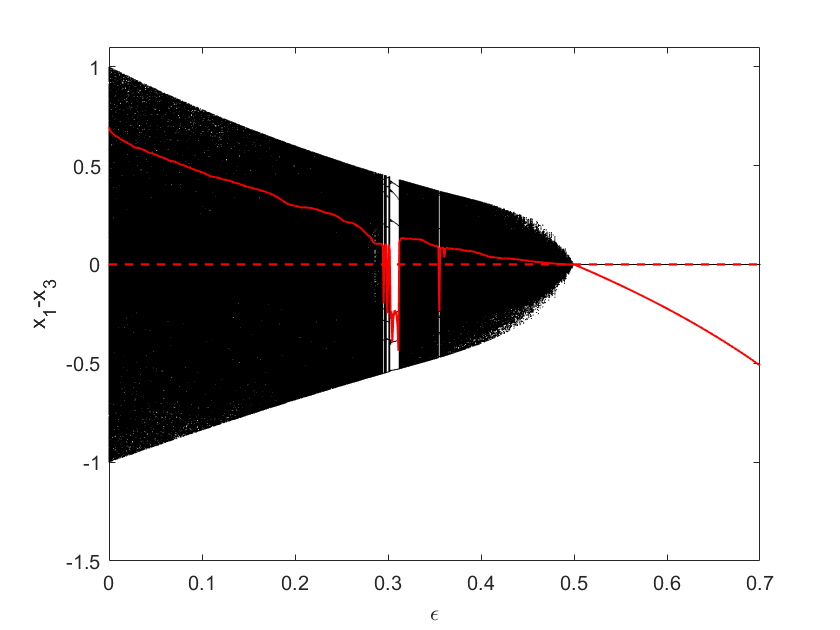}
\caption{TLE (red) of a random initial condition $x=(x_1,x_1,x_3)$ and the quantity $x^t_1-x^t_3$ (black) for the orbit of $x$ at time $t\geq 10^7$ and $\varepsilon\in[0,0.7]$.}\label{f2}
\end{figure}

For 750 uniformly spaced values of $\varepsilon\in [0, 0.7]$, we chose at random a point $x=(x_1,x_1,x_3)\in \mathcal{C}_2$ and plot on the vertical axis of Figure \ref{f2} the value of $x^t_1-x^t_3$ for $2.10^3$ consecutive values of $t\geq 10^7$. The TLE associated with $x$ is estimated by computing the Birkhoff sum in formula \eqref{lam} up to a time $n$ for which the last 3 successive values of the sum are within the same interval of size $10^{-6}$. For all tested $\varepsilon$, and for $100$ different random choices of $x$, we always obtain the same estimation of the TLE, and observe the same qualitative behavior of the orbit.

As $\varepsilon$ increases, the length of the interval containing  $x_1^t-x_3^t$ for large $t$ decreases, indicating that the limit orbit of $x=(x_1,x_1,x_3)$ becomes closer to the diagonal. In particular, for $\varepsilon>1-\exp(-\lambda_\mu)=0.5$, according to \eqref{tlediag}, the TLE is negative for $\mu$ almost all point of the diagonal, and all the tested starting points in $\mathcal{C}_2$ end up in a fully synchronized state. Note that the diagonal is not expected to attract all of the initial conditions in $X$. Its basin of attraction could for instance have a riddled structure, if there exists another ergodic invariant measure $\mu'$ of the local map such that $\varepsilon<1-\exp(-\lambda_{\mu'})$ \cite{alexander,ashwin}.
On the other hand, according to the point $3)$ in Remark \ref{erg}, for $\varepsilon>1-1/\sup|f'|=3/4$, we have $\Lambda_N^{\bot,+}(x)<0$ for every point $x$ of the diagonal. 

For $\varepsilon \in [0, 0.5]$, the TLE is mostly  positive and $\{x^t_1-x^t_3\}_{t\in\N}$ appears to be dense in a sub-domain of $[-1,1]$. Yet, for some values of $\varepsilon$ in the interval $[0.29, 0.37]$, we notice some abrupt changes in the sign of the TLE, associated with several bifurcations. We explore more in detail this region in Figure \ref{fbifzoom}, where we distinguish 3 different types of regimes:\\

\begin{figure}[t!]
    \centering
    \begin{subfigure}[t]{0.5\textwidth}
        \centering
        \includegraphics[height=2.5in]{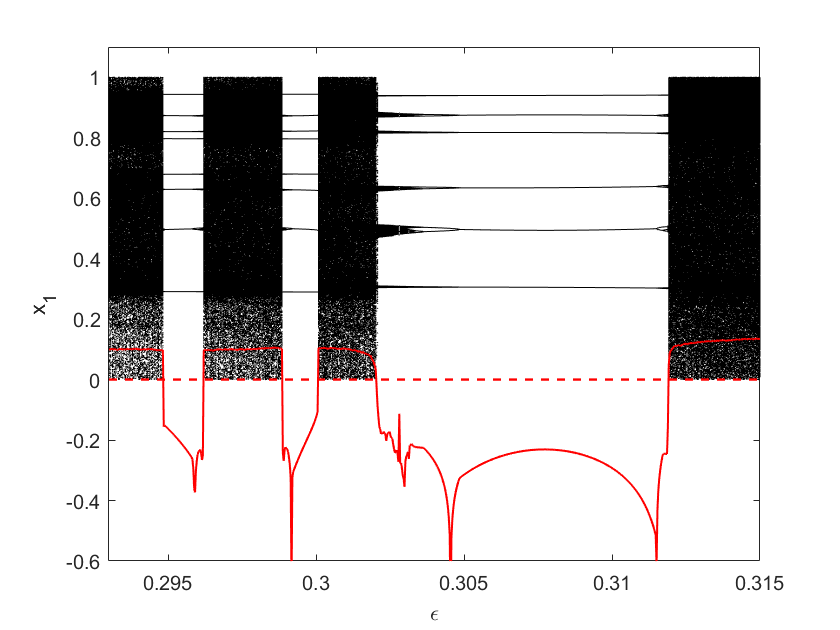}
        \caption{}
    \end{subfigure}
    ~ 
    \begin{subfigure}[t]{0.5\textwidth}
        \centering
        \includegraphics[height=2.5in]{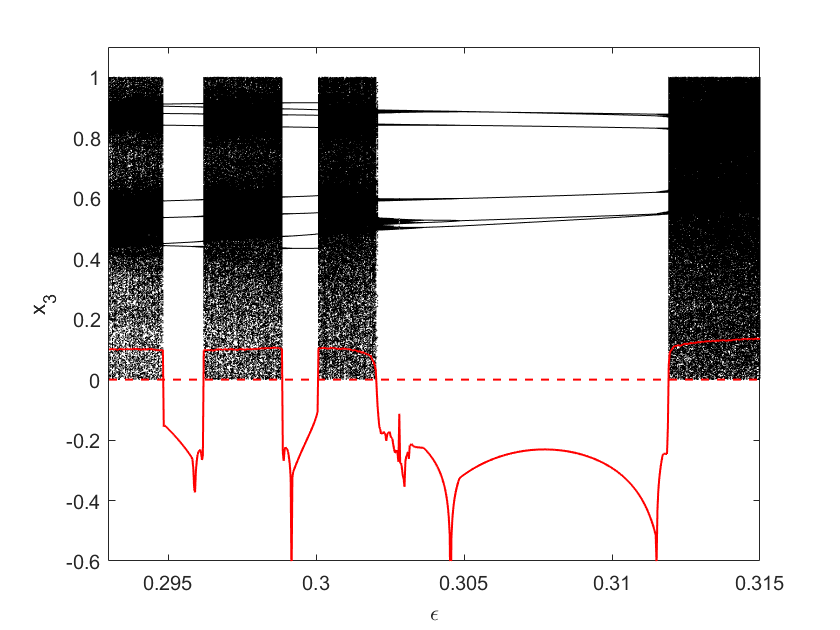}
        \caption{}
    \end{subfigure}
    \caption{TLE (red) of a random initial condition $x=(x_1,x_1,x_3)$ and the values of $x^t_1$ (A) and $x^t_3$ (B) for the orbit of $x$ at time $t\geq 10^7$ and $\varepsilon\in[0.293,0.315]$.}
    \label{fbifzoom}
\end{figure}

{\bf 1) Repelling chimeras with dense dynamics.}
For the values of $\varepsilon$ where the TLE is positive, both $\{x^t_1\}_{t\in\N}$ and $\{x^t_3\}_{t\in\N}$ appear to be dense in $[0,1]$. In Figure \ref{f18}, we plot the corresponding attractor for $\varepsilon=0.301$. For these regimes, numerical estimates of the two Lyapunov exponents in the directions parallel to $\mathcal{C}_2$ ($\xi\in V_2^{\parallel}$ in \eqref{ple}) give positive values which are independent on the random starting point. This suggests that $\mathcal{C}_2$ hosts an absolutely continuous ergodic invariant measure, as shown by the empirical density plotted in Figure  \ref{f18}, for $\varepsilon=0.301$. We observe some singularities, probably originating from the singularities of the invariant density of the local map $f$ at 0 and 1. The positive TLE suggests  that $\mathcal{C}_2$ does not attract generic points in $X\setminus \mathcal{C}_2$ in these regimes, which is confirmed by our simulations.\\

\begin{figure*}[t!]
    \centering

    \begin{subfigure}[t]{0.5\textwidth}
        \centering
        \includegraphics[height=2.1in]{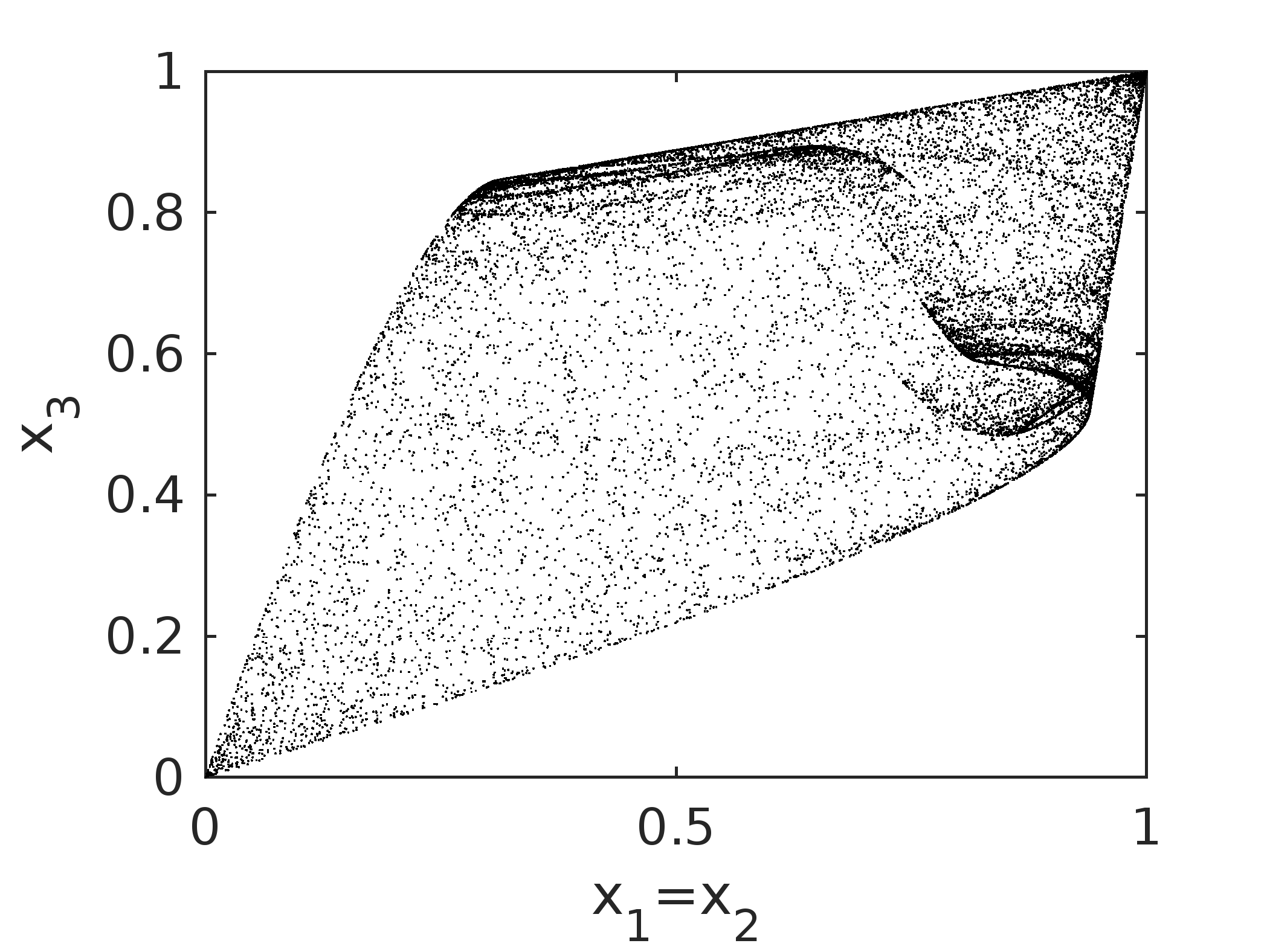}
        \caption{}
    \end{subfigure}
 
    \begin{subfigure}[t]{0.5\textwidth}
        \centering
        \includegraphics[height=2.1in]{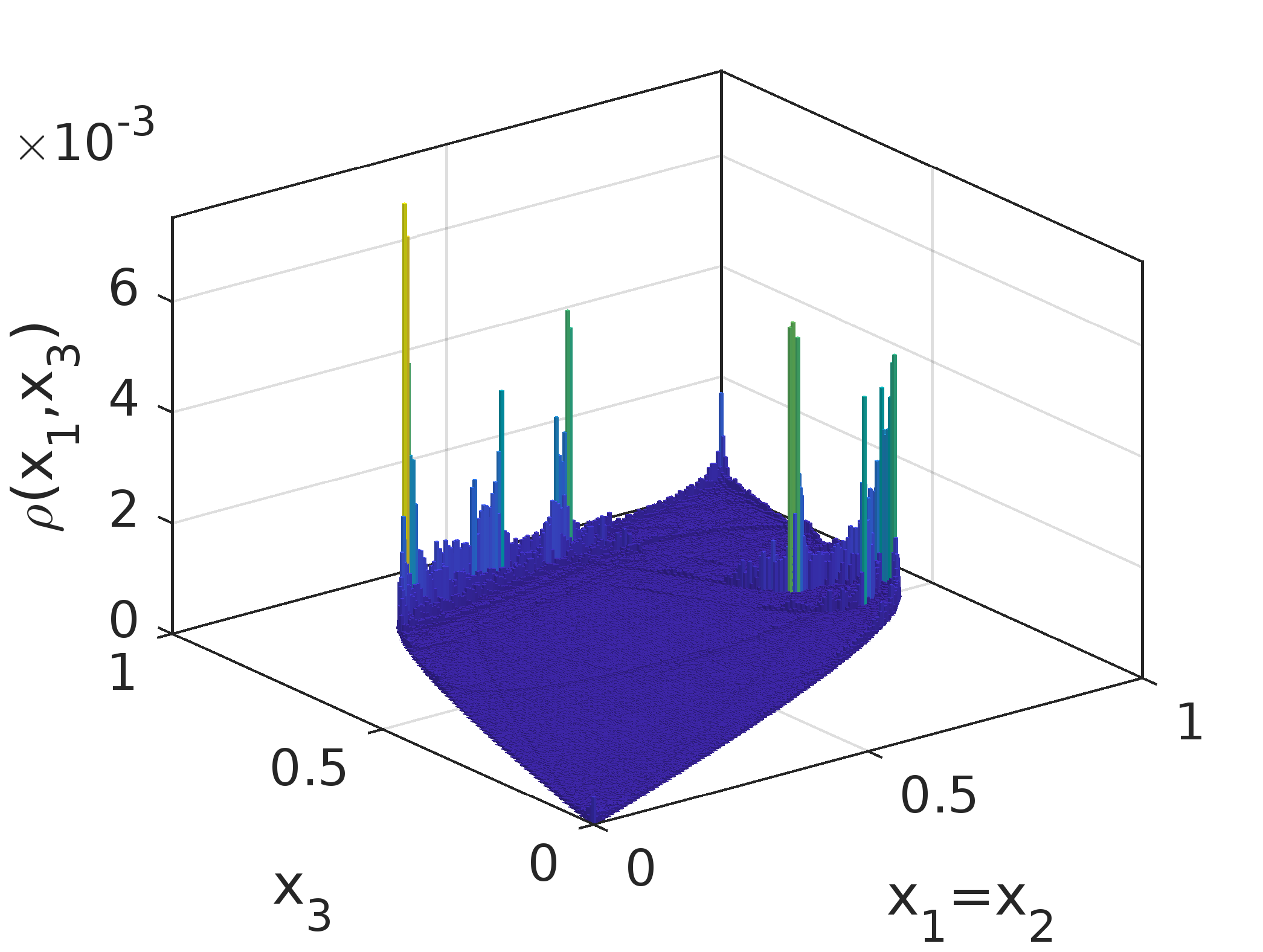}
        \caption{}
    \end{subfigure}%

    \caption{Attractor (A) of the orbits of $\mathcal{C}_2$ for $\varepsilon=0.301$ and its associated empirical density (B).}\label{f18}
\end{figure*}

\begin{figure}
\includegraphics[height=2.5in]{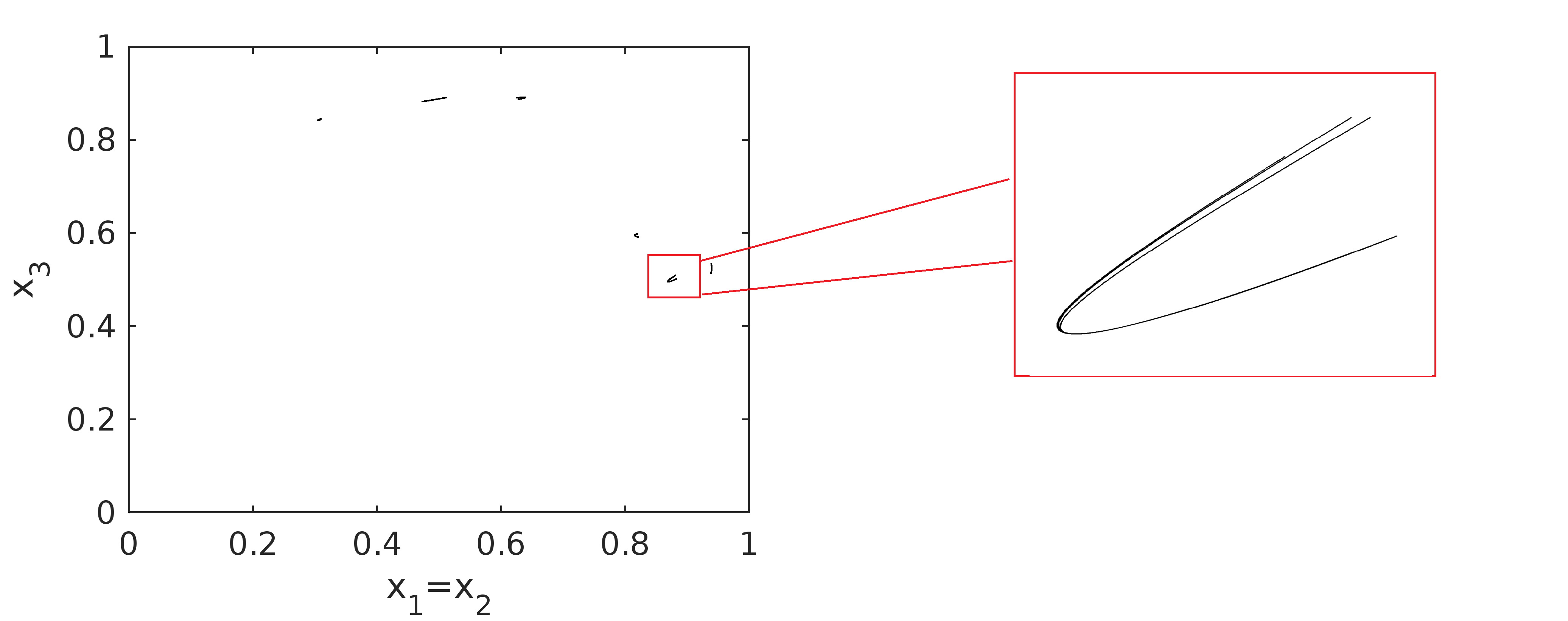}
\caption{Attractor of the orbits of $\mathcal{C}_2$ for $\varepsilon=0.30215$.}\label{f12}
\end{figure}

{\bf 2) Attracting chimeras with hyperbolic attractors.}
As shown in Figure \ref{fbifzoom}, the dynamics in $\mathcal{C}_2$ undergoes several bifurcations. For $\varepsilon\approx 0.302$ for instance, the TLE becomes negative. For a range of coupling near this value, we observe that a collection of Hénon-type sets contained in $\chi_{\{1,2\}}\subset\mathcal{C}_2$ attracts the orbits of $\mathcal{C}_2$ (see Figure \ref{f12}). Also, the two Lyapunov exponents in the directions parallel to $\mathcal{C}_2$ are of opposite signs, which indicates that the dynamics inside $\mathcal{C}_2$ is hyperbolic. As suggested by the negative sign of the TLE, all of the tested random initial conditions in $X$ ended up attracted to the Hénon-type sets contained in $\chi_{\{1,2\}}$ or to their symmetric analogs contained in $\chi_{\{1,3\}}$ or $\chi_{\{2,3\}}$. Overall, these simulations support the existence of chimera spaces containing a chaotic attractor and whose basin of attraction has positive Lebesgue measure in $X$.\\

{\bf 3) Attracting periodic cycles.} For $\varepsilon\simeq 0.3045$ and $\varepsilon\simeq 0.3115$, the TLE diverges and the system undergoes a bifurcation. Between this two values, the dynamics in $\mathcal{C}_2$ admits a limit cycle. For instance, for $\varepsilon=0.306$, the cycle is of period 6 and contains the point
\begin{equation}
\begin{aligned}
x=(0.494797626302772,   0.494797626302772,   0.884647247995353).\\
\end{aligned}
\end{equation}
We show in Figure \ref{cycle} the projection of this cycle in the plane $x_1,x_3$. Its associated TLE is $$\Lambda^{\bot}_2(x)=-0.2620197...$$ and is negative, due to the closeness of the first coordinate of $x$ to the critical point of $f$. The spectral radius of the matrix $DF^6_\varepsilon(x)$ is smaller than 1, indicating that this cycle of chimera states is an  asymptotically stable attractor of $F_\varepsilon:X\to X$. Therefore, it attracts a set of positive Lebesgue measure in $X$. In fact, this cycle, together with the two other cycles obtained by permutation of the variables $x_1,x_2$ and $x_3$, attract all of the tested initial conditions in $X$\footnote{The basin of attraction of these cycles is not the whole phase-space $X$. It is clear for instance that the diagonal and its preimages do not belong to this basin.}. 

\begin{figure}
\includegraphics[height=3in]{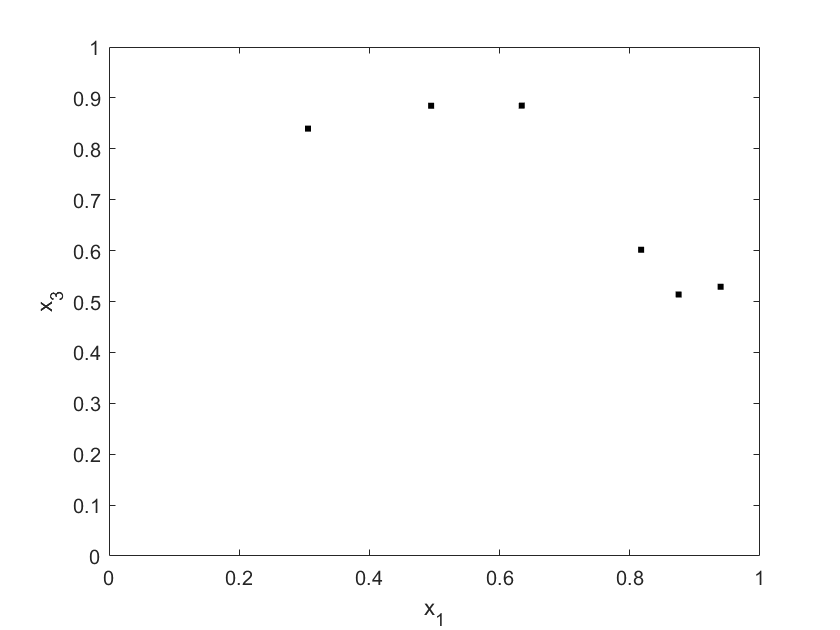}
\caption{Projection in the plane $x_1,x_3$ of the limit cycle contained in $\mathcal{C}_2$ for $\varepsilon=0.306$.}\label{cycle}
\end{figure}

Considering the symmetries of the CML, one could wonder if cycles are likely to belong to a cluster space. The following result provides the beginning of an answer.

\begin{proposition}
Suppose $f$ is a polynomial of degree $d$ and let $c$ be a cycle of period $p \ge 1$ of $F_{\varepsilon}$. 
If 
\begin{equation}\label{crit}
p<\frac{\log_d N!}{N},
\end{equation}
then there exists $k\geq 2$ such that $c\subset \mathcal{C}_k$.
\end{proposition}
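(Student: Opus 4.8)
The plan is to exploit the global nature of the coupling in \eqref{CML}: at each time step every coordinate is iterated by the \emph{same} one-dimensional map. Write the cycle as $c=\{x^0,\dots,x^{p-1}\}$, set $x^p:=x^0$, and for $t\in\{0,\dots,p-1\}$ put
\[
S^t:=\frac{\varepsilon}{N}\sum_{j=1}^N f(x_j^t),\qquad g_t(y):=(1-\varepsilon)f(y)+S^t.
\]
Then \eqref{CML} becomes $x_i^{t+1}=g_t(x_i^t)$ for \emph{all} $i$ and $t$, and $g_t$ is a real polynomial; if $\varepsilon<1$ its leading coefficient is $(1-\varepsilon)$ times that of $f$, so $\deg g_t=d$, while if $\varepsilon=1$ it is constant. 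Letting $G:=g_{p-1}\circ\cdots\circ g_0$, periodicity gives $G(x_i^0)=x_i^p=x_i^0$ for every $i$, i.e.\ the $N$ numbers $x_1^0,\dots,x_N^0$ are all fixed points of the single polynomial $G$.

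I would then just count. If $\varepsilon=1$, then $x^1=(S^0,\dots,S^0)\in\mathcal{D}=\mathcal{C}_N$ and we are done, so assume $\varepsilon<1$; then $\deg G=d^p$, and since $d\ge2$ (otherwise the bound \eqref{crit} is not even defined) we have $d^p\ge2$, so $G(y)-y$ is a nonzero polynomial of degree $d^p$ and has at most $d^p$ roots. Hence $\{x_1^0,\dots,x_N^0\}$ has at most $d^p$ distinct elements. Now \eqref{crit} gives $p<\frac{\log_d N!}{N}<\log_d N$, the second inequality because $N!<N^N$, whence $d^p<N$; by the pigeonhole principle $x_i^0=x_j^0$ for some $i\neq j$. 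Finally, from $x_\ell^t=(g_{t-1}\circ\cdots\circ g_0)(x_\ell^0)$ we get $x_i^t=x_j^t$ for all $t$, so $c\subset\mathcal{C}_{\{i,j\}}$, which by the symmetry reduction of Section~\ref{stab} is a copy of $\mathcal{C}_2$.

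There is no real obstacle here: once the first observation is in place the rest is elementary, and the only point needing care is the degenerate case $\varepsilon=1$, treated separately above. I note that the argument actually proves the stronger statement with \eqref{crit} replaced by $p<\log_d N$. The bound \eqref{crit} as stated is the one that falls out of the alternative, more symmetry-flavored route: if $c$ had pairwise distinct coordinates at some (hence, by forward invariance of cluster spaces, at every) time, then the $N!$ points $\{\sigma(x^0):\sigma\in S_N\}$ would be distinct fixed points of the polynomial self-map $F_\varepsilon^p$, whose components have degree $d^p$, so Bézout would force $N!\le (d^p)^N$; I would nonetheless favour the one-dimensional argument above, since it sidesteps the need to check that these $N!$ periodic points are isolated.
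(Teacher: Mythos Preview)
Your argument is correct and genuinely different from the paper's. The paper argues as you sketch in your final paragraph: it treats $F_\varepsilon^p(x)=x$ as a system of $N$ polynomial equations of degree $d^p$ in $N$ unknowns, invokes B\'ezout's bound $d^{Np}$ on the number of solutions, and then observes that a point with pairwise distinct coordinates would generate $N!$ distinct solutions under the $S_N$-symmetry of the coupling, forcing $N!\le d^{Np}$.

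Your route instead exploits the mean-field structure of \eqref{CML} directly: along a fixed orbit every coordinate evolves under the \emph{same} sequence of one-variable polynomials, so the $N$ initial coordinates are roots of a single degree-$d^p$ polynomial in one variable. This buys you three things. First, it is more elementary: you only need the fundamental theorem of algebra, not B\'ezout. Second, it yields the sharper threshold $p<\log_d N$ (equivalently $d^p<N$) rather than $p<\frac{1}{N}\log_d N!$; the paper's Remark following the proposition notes the asymptotic $\frac{1}{N}\log_d N!\sim\log_d N$, which your argument actually attains. Third, as you point out, it sidesteps the question of whether the $N!$ permuted fixed points are isolated solutions of the polynomial system, which the B\'ezout argument needs but the paper does not verify. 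One very minor point: in the case $\varepsilon=1$ you show $x^1\in\mathcal{D}$; to conclude $c\subset\mathcal{D}$ you should add that forward invariance of $\mathcal{D}$ then gives $x^0=x^p=F_\varepsilon^{p-1}(x^1)\in\mathcal{D}$ as well.
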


\begin{proof}
If $f$ is a polynomial of degree $d$, then 
\begin{equation}\label{fix}
F_{\varepsilon}^p(x)=x,
\end{equation} 
 is a system of $N$ polynomial equations of degree $d^p$ with $N$ variables. By Bézout's Theorem, this kind of systems have at most ${d^{Np}}$ solutions (see for instance Theorem 4 in \cite{bezout}). Now, let $x$ be a solution of \eqref{fix}. Then, any permutation $\sigma(x)$ of the coordinates of $x$ is also a solution, since for our CML we have $\sigma\circ F_\varepsilon=F_\varepsilon\circ\sigma$. 
If $x_i\neq x_j$ for all $i\neq j\in\{1,\dots,N\}$, then $\sigma(x)\neq x$ for every permutation $\sigma(x)$. Therefore, \eqref{fix} has at least $N!$ different solutions and $d^{Np}\geq N!$, which contradicts \eqref{crit}. Thus, if $x$ is a periodic point of period $p$ and \eqref{crit} holds, there exist
$i\neq j\in\{1,\dots,N\}$ such that $x_i=x_j$. In other words,  $x\in\mathcal{C}_k$ for some $k\geq 2$ and its orbit $c$ is contained in $\mathcal{C}_k$.
\end{proof}

\begin{remark}
\noindent 1) Since the sequence $g_N:=\frac{\log_d N!}{N}$ tends to infinity, for any polynomial $f$ and  $p$ fixed, every cycle of period $p$ belongs to a clustered space, provided $N$ is large enough. For the logistic map ($d=2$), we can show for instance that every cycle of period $p\le 2$ belongs to a cluster space provided $N>8$. Note that, by Stirling formula, $g_N$ can be estimated for large $N$ by $$g_N\underset{N\to \infty}{\sim} \log_d(N).$$ 

\noindent 2) We expect criteria sharper than (\ref{crit}) to exist. In particular, the number of solutions of (\ref{fix}) which belong to $[0,1]^N$ should be much less than $d^{Np}$. See for example \cite{fewn}, which provides a bound for the number of positive real solutions for such systems of polynomial equations.
\end{remark}

\subsection{Chimeras for larger systems} Our observations for $N>3$, are the following: As for the case $N=3$, for the values of coupling where the TLE is positive, it does not depend on the starting point. This could indicate that each $\mathcal{C}_k$, for $k=1,...,N$ hosts a unique absolutely continuous invariant measure. On the other hand, when the TLE is negative, its value can depend on the starting point, and the attractors of $F_\varepsilon|_{\mathcal{C}_k}$ are either cycles or fractal sets. 

For instance, for $N=5$, $k=2$ and $\varepsilon\in(0.235,0.319)$, the TLE is negative and takes two different values depending on the initial condition (see Figure \ref{TLE5}). In that case, $\mathcal{C}_2$ hosts 4 distinct attractors $A_1,A_2,A_3,A_4$ contained in $\chi_{\{1,2\}}\cap \chi_{\{3,4,5\}}$, $\chi_{\{1,2,3\}}\cap\chi_{\{4,5\}}$, $\chi_{\{1,2,4\}}\cap\chi_{\{3,5\}}$ or $\chi_{\{1,2,5\}}\cap\chi_{\{3,4\}}$, respectively. All these attractors are obtained by permuting the variables of any of the others. Depending on the size of the block in which $x_1$ ends up in, the TLE can take two different values: $\Lambda_1$ if the system is attracted to $A_1$ and $\Lambda_2$ if it is attracted to $A_2$, $A_3$ or $A_4$. We plot the attractor $A_1$ for $\varepsilon=0.2373$ in Figure \ref{att5}.

For even bigger systems, and for coupling regimes where the TLE is negative, we observe the emergence of different blocks of synchronization, whose sizes and dynamics depend on the initial condition. For $N=11$ for instance, and for some values of $\varepsilon$, depending on the starting point, the dynamics is attracted to a set of the form $\mathcal{\chi}_J\cup \mathcal{\chi}_{J'}$, where either $|J|=5$ and $|J'|=6$ or $|J|=4$ and $|J'|=7$. Up to $N=100$, we still observe regimes where the TLE is negative and the created chimeras always contain several synchronized blocks, whose sizes depend on the initial condition.\\
For $N=7$ and $\varepsilon=0.23$, we observe for a large number of initial conditions the emergence of chimeras constituted by two synchronized blocks of sizes 2 and 3 and two independent sites whose dynamics is not synchronized with any of the other sites. The TLE of each of these blocks is negative and these states admit a limit cycle of period 8. This is to our knowledge the first example of such chimeras having two sites with independent dynamics for globally coupled logistic maps for $N>3$.

\begin{figure}
\includegraphics[height=3in]{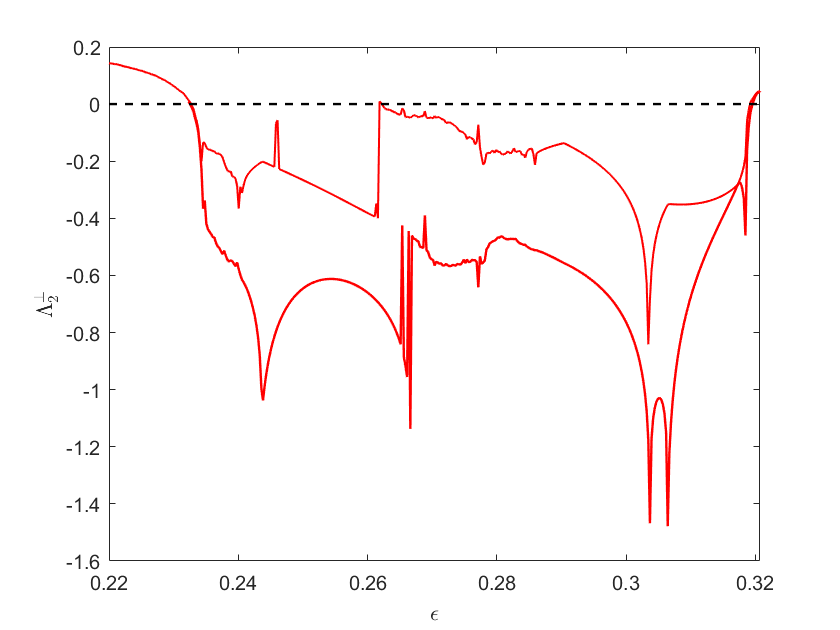}
\caption{Values of TLE of 1000 random  points of $\mathcal{C}_2$ for $N=5$ and $\varepsilon\in [0.22,0.32]$.}\label{TLE5}
\end{figure}

\begin{figure}
\includegraphics[height=3in]{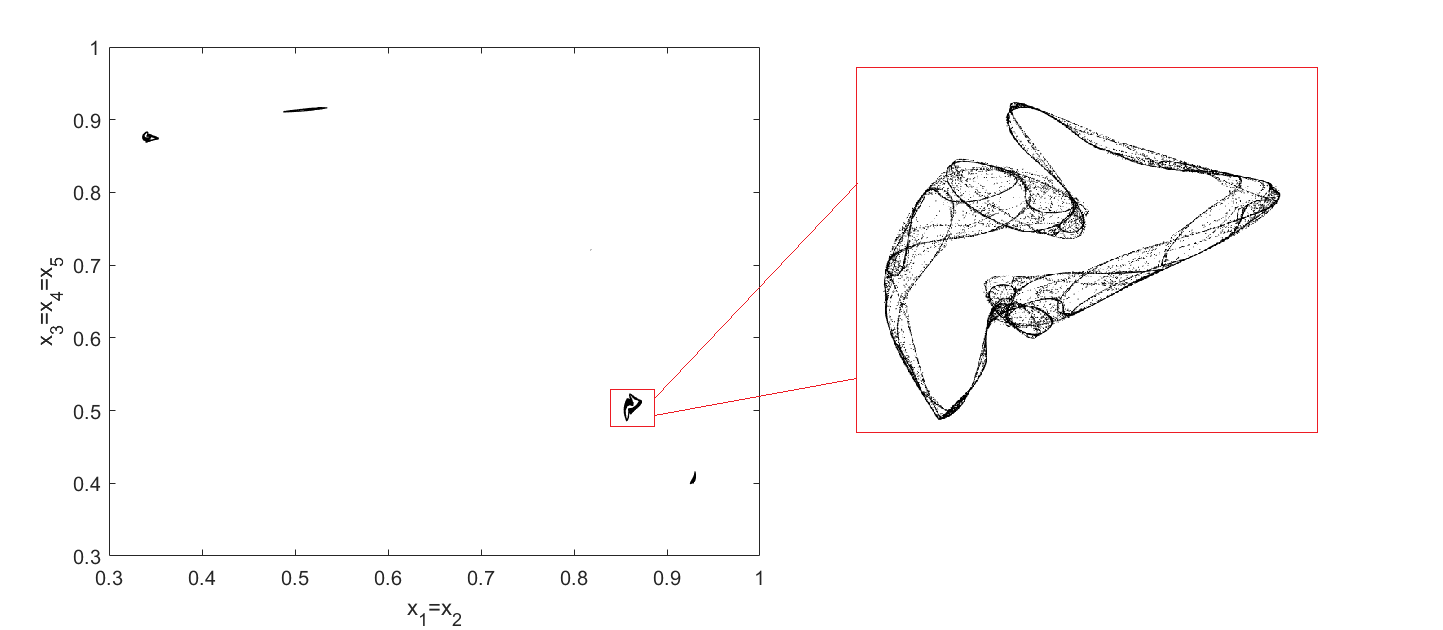}
\caption{Projection in the plane $(x_1,x_3)$ of one of the  attractors of the CML for $N=5$ and $\varepsilon=0.2373$.}\label{att5}
\end{figure}

\section{On the attractors of chimera states}

 The numerical simulations of the last section show that, in some regimes where the TLE is positive, the asymptotic dynamics in cluster spaces can take place on a large domain, suggesting the existence of absolutely continuous invariant measures for this dynamics (see Figure \ref{f18}). 
 However, we did not observe chimeras state with a negative TLE and whose asymptotic dynamics is supported on a set of positive Lebesgue measure. We rather observe limit cycles or fractal attractors. We then wonder if chimera spaces with negative TLE (and thus attracting) can host attractors of positive Lebesgue measure. In this section, we show that this cannot happen if the coupling is strong enough. To this end, let us study the dynamics inside cluster spaces.
 
As mentioned before, any chimera space of size $k$ belongs to a forward invariant subspace 
\[
\mathcal{C}_{J}=\{x\in X :  x_i=x_j,\forall i,j\in J\},
\]
for some $J\subset\{1,2,\dots,N\}$ of cardinality $k$. When restricted to one of these subspaces, the map $F_\epsilon$ can be identified to a new CML with $N-k+1$ sites $\tilde{x}_1,\tilde{x}_2,\dots, \tilde{x}_{N-k+1}$.
This new CML defined on $\tilde{X}=I^{N-k+1}$ writes  
\begin{equation}
\tilde{F}_\varepsilon(\tilde{x})=\tilde{C}_\varepsilon \circ \tilde{F}_0 (\tilde{x}),
\end{equation}
where the new map $\tilde{F}_0 : \tilde{X} \to \tilde{X}$ and coupling operator $\tilde{C}_\varepsilon:\tilde{X} \to \tilde{X}$  obtained from \eqref{CML} 
are given by
\[
\tilde{F}_0(\tilde{x}_1,...,\tilde{x}_{N-k+1}) = (f(\tilde{x}_1),...,f(\tilde{x}_{N-k+1}))
\]
and 
\begin{equation}\label{Ce}
\tilde{C}_\varepsilon(\tilde{x}_1,...,\tilde{x}_{N-k+1})=\begin{pmatrix}
1-\varepsilon +\frac{k\varepsilon}N & \frac{\varepsilon}N &\frac{\varepsilon}N &  \dots&\frac{\varepsilon}N  \\
\frac{k\varepsilon}N & 1-\varepsilon +\frac{\varepsilon}N & \frac{\varepsilon}N &  \dots&\frac{\varepsilon}N \\
\frac{k\varepsilon}N & \frac{\varepsilon}N & 1-\varepsilon+\frac{\varepsilon}N &  &\frac{\varepsilon}N \\
\frac{k\varepsilon}N & \frac{\varepsilon}N  & \frac{\varepsilon}N & \ddots & \vdots\\
\vdots &\vdots & \vdots&  &1-\varepsilon+\frac{\varepsilon}N\\ \end{pmatrix}\begin{bmatrix}
           \tilde{x}_{1} \\
           \tilde{x}_{2} \\
           \vdots \\
           \vdots \\
           \tilde{x}_{N-k+1}
         \end{bmatrix}.
         \end{equation}

\begin{proposition}\label{the2}
If $f$ is piece-wise $C^1$ and is such that

\begin{equation}\label{cond}
\sup|f'|<(1-\varepsilon)^{\frac{k-N}{N-k+1}},
\end{equation}
then the non-wandering set $\Omega$ of $\tilde{F}_\varepsilon$ has zero $(N-k+1)-$dimensional Lebesgue measure.
\end{proposition}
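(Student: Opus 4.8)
The plan is to derive everything from a single uniform volume–contraction estimate for $\tilde F_\varepsilon$ on $\tilde X=I^{N-k+1}$. Write $m:=N-k+1$. Since $\tilde F_\varepsilon=\tilde C_\varepsilon\circ\tilde F_0$ and $\tilde F_0$ acts coordinatewise by $f$, at any $\tilde x$ at which $f$ is differentiable in each coordinate we have $D_{\tilde x}\tilde F_\varepsilon=\tilde C_\varepsilon\,\mathrm{diag}(f'(\tilde x_1),\dots,f'(\tilde x_m))$, hence
\[
\big|\det D_{\tilde x}\tilde F_\varepsilon\big|=|\det\tilde C_\varepsilon|\prod_{i=1}^m|f'(\tilde x_i)|\le|\det\tilde C_\varepsilon|\,(\sup|f'|)^m .
\]
To evaluate $\det\tilde C_\varepsilon$ I would reuse the splitting $\RR^N=V_k^\parallel\oplus V_k^\bot$ from the proof of Theorem \ref{the1}: the coupling matrix of \eqref{CML} equals $(1-\varepsilon)I_N+\tfrac{\varepsilon}{N}J_N$ with $J_N$ the all-ones matrix, so it has spectrum $\{1\}$ (simple) and $\{1-\varepsilon\}$ (multiplicity $N-1$), i.e. $\det C_\varepsilon=(1-\varepsilon)^{N-1}$. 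It leaves both summands invariant, it acts on $V_k^\bot$ as $(1-\varepsilon)\,\mathrm{Id}$ (the span of the vectors $v^1,\dots,v^{k-1}$ from that proof), and its restriction to $V_k^\parallel$ is, in the basis $v^k,\dots,v^N$, precisely the matrix $\tilde C_\varepsilon$ of \eqref{Ce}. Consequently $\det\tilde C_\varepsilon=\det C_\varepsilon/(1-\varepsilon)^{k-1}=(1-\varepsilon)^{N-k}$, and raising \eqref{cond} to the power $m>0$ and multiplying by $(1-\varepsilon)^{N-k}$ is exactly the assertion that
\[
\rho:=(1-\varepsilon)^{N-k}(\sup|f'|)^{N-k+1}<1 .
\]

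Next I would turn the pointwise bound into a set bound. Because $\tilde F_0$ is piecewise $C^1$ with finitely many branches, on each of which it is a diffeomorphism onto its image, the change-of-variables formula gives $\Leb(\tilde F_\varepsilon(A))\le\rho\,\Leb(A)$ for every measurable $A\subset\tilde X$, hence $\Leb(\tilde F_\varepsilon^{\,n}(A))\le\rho^{\,n}\Leb(A)$. Since $f(I)\subset I$ and convex combinations of points of the interval $I$ stay in $I$, we have $\tilde F_\varepsilon(\tilde X)\subset\tilde X$, so the compact sets $K_n:=\overline{\tilde F_\varepsilon^{\,n}(\tilde X)}$ form a decreasing sequence with $\Leb\big(\bigcap_{n}K_n\big)=\lim_n\Leb(K_n)\le\lim_n\rho^{\,n}\Leb(\tilde X)=0$ (passing to closures adds only the finitely many limit points coming from the branch endpoints of $f$, which form a null set).

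It then remains to show $\Omega\subset\bigcap_nK_n$ up to a Lebesgue-null set, which I would do by induction on $n$. Given $x\in\Omega$, apply the non-wandering property to the balls $U_j=B(x,1/j)$ to get $m_j\ge1$ and $y_j\in U_j$ with $\tilde F_\varepsilon^{m_j}(y_j)\in U_j$. If $\{m_j\}$ is unbounded, then along a subsequence with $m_j\ge n$ the points $\tilde F_\varepsilon^{m_j}(y_j)\in\tilde F_\varepsilon^{\,n}(\tilde X)$ converge to $x$, so $x\in K_n$. If $\{m_j\}$ is bounded, some value $m$ recurs, and---provided the finite forward orbit $x,\tilde F_\varepsilon(x),\dots,\tilde F_\varepsilon^{m-1}(x)$ avoids the branch points of $f$, which excludes only a Lebesgue-null set of $x$---continuity of $\tilde F_\varepsilon^m$ at $x$ forces $\tilde F_\varepsilon^m(x)=x$, so $x=\tilde F_\varepsilon^{mM}(x)\in\tilde F_\varepsilon^{\,n}(\tilde X)$ as soon as $mM\ge n$. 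Hence $\Leb(\Omega)=0$. (The degenerate case $\varepsilon=1$ is separate: then $\tilde C_\varepsilon$ has rank one, so $\Omega\subset\tilde F_\varepsilon(\tilde X)$ already lies in a one-dimensional set and the conclusion is trivial.)

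The routine ingredients are the Jacobian formula, the determinant $\det\tilde C_\varepsilon=(1-\varepsilon)^{N-k}$ via the $V_k^\parallel\oplus V_k^\bot$ decomposition already used for Theorem \ref{the1}, and the geometric decay $\rho^{\,n}\to0$. The step requiring genuine care, and the main obstacle, is the last one: the inclusion $\Omega\subset\bigcap_nK_n$ modulo a null set, together with the bookkeeping of the finitely many points where $f$ fails to be $C^1$ (closures of iterated images, and continuity versus discontinuity of the base point in the induction).
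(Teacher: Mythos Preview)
Your proof is correct and follows essentially the paper's strategy: establish the volume contraction via $\det\tilde C_\varepsilon=(1-\varepsilon)^{N-k}$ (the paper gets this by row reduction rather than your block decomposition) to show $\mathcal S:=\bigcap_n\tilde F_\varepsilon^{\,n}(\tilde X)$ is Lebesgue-null, and then argue $\Omega\subset\mathcal S$. Your last step is more elaborate than needed: since the non-wandering definition lets you demand $m_j\ge j$, the bounded case is vacuous, and the paper simply notes that $\tilde F_\varepsilon^{\,n_0}(\tilde X)$ is already compact (continuity of $f$ being implicit in ``piecewise $C^1$''), so any $x\notin\mathcal S$ has a neighborhood disjoint from all sufficiently high images---giving $\Omega\subset\mathcal S$ exactly, with no closures, periodic-point detour, or null-set bookkeeping.
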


\begin{proof} 
Let us consider the set 
\[
\mathcal{S}=\bigcap_{n\in\N} \tilde{F}^n_\varepsilon(\tilde{X})
\]
and let us show that
\begin{equation}\label{==}
\tilde{F}_\varepsilon (\mathcal{S})= \mathcal{S}.
\end{equation}
The inclusion $ \tilde{F}_\varepsilon (\mathcal{S})\subset\mathcal{S}$ is direct.  Now, if $y\in\mathcal{S}$, then for any $n\in\N$ there exists $x_n \in \tilde{F}_\varepsilon^n(\tilde{X})$ such that $\tilde{F}_\varepsilon(x_n)=y$. 
The set $\tilde{X}$ being compact, there exists a sub-sequence $\{x_{n_i}\}_{i\in\N}$ which converges to a point $x\in \tilde{X}$. 
Let $p\in\N$. Then, for every $i\in\N$ large enough $x_{n_i}\in \tilde{F}^{n_i}_\varepsilon(\tilde{X})\subset \tilde{F}^{p}_\varepsilon(\tilde{X})$, and by compactness $x\in \tilde{F}^{p}_\varepsilon(X)$. It follows that  $x\in \mathcal{S}$ and that  
\[
\tilde{F}_\varepsilon(x)=\lim_{i \to \infty}\tilde{F}_\varepsilon(x_{n_i})=y,
\]
by continuity of $\tilde{F}_\varepsilon$.  Thus, if $y\in S$, then $y\in\tilde{F}_\varepsilon(S)$, which ends to show  \eqref{==}.\\

As
$
\tilde{F}_\varepsilon (\mathcal{S})=\tilde{C}_\varepsilon \circ \tilde{F}_0(\mathcal{S})
$
where $\tilde{C}_\varepsilon$ is linear and invertible, we have:
\begin{equation}\label{ss}
 V(\tilde{F}_\varepsilon(\mathcal{S}))=|\det D\tilde{C}_\varepsilon| V (\tilde{F}_0(\mathcal{S})),
\end{equation}
$D\tilde{C}_\varepsilon$ being the square matrix in \eqref{Ce} and $V$ denoting the $(N-k+1)$-dimensional Lebesgue measure. Subtracting in $D\tilde{C}_\varepsilon$ the first line to all the other lines, we obtain that

\[
 \det D\tilde{C}_\varepsilon=\det
\begin{pmatrix}
1-\varepsilon +\frac{k\varepsilon}N & \frac{\varepsilon}N &\frac{\varepsilon}N &  \dots&\frac{\varepsilon}N  \\
\varepsilon-1& 1-\varepsilon& 0 &  \dots&0 \\
\varepsilon-1 & 0 & 1-\varepsilon &  &0 \\
\varepsilon-1 & 0  & 0 & \ddots & \vdots\\
\vdots &\vdots & \vdots&  &1-\varepsilon\\
\end{pmatrix}.
\]
A Laplace expansion along the first line gives 
\begin{equation}\label{cc}
    \det D\tilde{C}_\varepsilon = (1-\varepsilon)^{N-k}.
\end{equation}

Let us denote $\mathcal{I}$ the set of domains of injectivity of $\tilde{F}_0$. Then, we have that

\begin{equation}\label{tt}
\begin{aligned}
    V(\tilde{F}_0(\mathcal{S})) &\le \sum_{I\in \mathcal{I}}\int_{I\cap\mathcal{S}} |\det D\tilde{F}_0(\tilde{x})| dV(\tilde{x})\\
                       & \le \int_\mathcal{S} |\det D\tilde{F}_0(\tilde{x})| dV(\tilde{x})\\
                       & = \int_\mathcal{S} \overset{N-k+1}{\underset{i=1}\prod} |f'(\tilde{x}_i)| dV(\tilde{x})\\
                       &\le \underset{\tilde{x}\in \mathcal{S}}\sup \overset{N-k+1}{\underset{i=1}\prod} |f'(\tilde{x}_i)| V(\mathcal{S} ).
\end{aligned}
\end{equation}
From \eqref{==}, \eqref{ss} and \eqref{cc} it follows that 
\begin{equation}
 V(\mathcal{S})\le(1-\varepsilon)^{N-k} \underset{\tilde{x}\in \mathcal{S}}\sup \overset{N-k+1}{\underset{i=1}\prod} |f'(\tilde{x}_i)| V(\mathcal{S} ).
\end{equation}
As by hypothesis
\begin{equation}\label{u}
   (1-\varepsilon)^{k-N} \ge \underset{\tilde{x}\in \mathcal{S}}\sup \overset{N-k+1}{\underset{i=1}\prod} |f'(\tilde{x}_i)|,
\end{equation}
we must have $V(\mathcal{S} )=0$.

In order to show that the non-wandering set $\Omega$ of $\tilde{F}_\varepsilon$ satisfies $V(\Omega)$=0, we prove that $\Omega\subset \mathcal{S}$. To proceed, we recall that $\Omega$  is the set of the points $x\in\tilde{X}$ such that for any neighborhood  $U$ of  $x$ and any  $n_0\ge1$, there exists $n\ge n_0$ such that $\tilde{F}_\varepsilon^{-n}(U)\cap U\neq \emptyset$.

Let us suppose $x\notin\mathcal{S}$ and let us show  that $x\notin\Omega$. If $x\notin \mathcal{S}$, then there exists $n_0\ge 1$ such that $x\notin \tilde{F}_\varepsilon^{n_0}(\tilde{X})$. Since $\tilde{F}_\varepsilon^{n_0}(\tilde{X})$ is compact, its complement in $\tilde{X}$ is an open set of $\tilde{X}$. Therefore, there exists a neighborhood  $U\subset\tilde{X}$ of  $x$ such that
 $\tilde{F}_\varepsilon^{n_0}(\tilde{X})\cap U=\emptyset$. This implies that  $\tilde{F}_\varepsilon^n(U)\cap U= \emptyset$ for all $n\ge n_0$, since $\tilde{F}_\varepsilon^n(U)\subset \tilde{F}_\varepsilon^{n_0}(\tilde{X})$ whenever $n\ge n_0$. It follows that  $x\notin \Omega$.
\end{proof}

\begin{remark}

1) Since the non-wandering set contains the $\omega$-limit set of any points of $\tilde{X}$, under condition \eqref{cond}, any attractor of the chimeric dynamics has to be of zero Lebesgue measure. It could be a cluster of larger size (such as the diagonal), a periodic cycle, or even a non-trivial set of zero Lebesgue measure, such as the hyperbolic attractor described in the previous section.

2) As $\mu(\Omega)=1$ for any invariant probability measure $\mu$ of $\tilde{F}_\varepsilon$, the previous proposition implies that  $\tilde{F}_\varepsilon$ admits no absolutely continuous invariant measure, if \eqref{cond} holds.

3) Under condition \eqref{cond}, we have $\sup|f'|<\frac1{1-\varepsilon}$ for all $2\le k\le N$. So by point 3) of Remark \ref{erg}, $\Lambda_k^{\bot,+}(x)<0$ for all $k$ and all $x\in \mathcal{C}_k$. In particular, the diagonal can attract a set of positive Lebesgue measure. However, this set is not necessarily of full measure and there can exist other attractors in $\tilde{X}$. Proposition \ref{the2} ensures that they are of zero Lebesgue measure if \eqref{cond} holds.

\end{remark}

\section{Conclusion and comments}
We were able to compute the TLE for globally coupled map lattices by exploiting the important symmetries of this class of systems. It is not clear whether one could provide analogous results for different kinds of coupling operators, such as the ones considered in \cite{batista}. In particular, the cluster sets $\mathcal{C}_J$ are not always invariant for other types of couplings. Overall, our results provide a better understanding of the conditions of formation and the long-term behavior of chimeras. They also confirm the existence of attracting chimera states evolving on a periodic cycle and suggest the existence of chaotic chimeras attracting a massive set of initial conditions. Our formula for the TLE provides a valuable tool to study the stability of cluster spaces from a numerical perspective. This allows for instance to discriminate between dummy chimeras caused by numerical artifacts and genuine attracting chimeras. Our study suggests that attracting chimeras whose dynamics settles on a set of positive Lebesgue measure in chimera spaces are implausible, although we did not prove them impossible for this type of coupling. We hope that our results will motivate future works that will either construct such an example or rule out their existence.

\section{Acknowledgements}
TC was partially supported by CMUP, which is financed by national funds through FCT – Fundaç$\tilde{a}$o para a
Ci\^encia e Tecnologia, I.P., under the project with reference UIDB/00144/2020. The authors thank Sandro Vaienti for his precious comments, which contributed to improving the quality of this paper. TC thanks Dylan Bansard-Tresse whose help was determinant in some of the computations of section 5.\\


\begin{thebibliography}{50}
\bibitem{exp1} M. R. Tinsley, S. Nkomo, and K. Showalter, Chimera and phase-cluster states in populations of coupled chemical oscillators. Nature Physics, 8(9):662–
665 (2012).

\bibitem{exp2} E. A. Martens, S. Thutupalli, A. Fourriere, and O. Hallatschek, Chimera states in mechanical oscillator networks. Proceedings of the National Academy of Sciences, 110(26):10563–10567 (2013).

\bibitem{ke} G. Keller, Exponents, attractors and Hopf decompositions for interval maps, Ergodic
Theory Dynam. Systems, 10 717–744 (1990).

\bibitem{bezout} {\em https://terrytao.wordpress.com/2011/03/23/bezouts-inequality/}

\bibitem{fewn} D. J. Bates, F. Bihan, F. Sottile, Bounds on the Number of Real Solutions to Polynomial Equations, Int. Math. Res. Not. IMRN, 23 7 (2007).

\bibitem{stro} D. M. Abrams, S. H. Strogatz, Chimera states for coupled oscillators, Phys. Rev. Lett. 93(17):174102 (2004).

\bibitem{bick} C. Bick, P. Ashwin, Chaotic weak chimeras and their persistence in coupled populations of phase oscillators. Nonlinearity,
29(5):1468 (2016).

\bibitem{alexander} J.C. Alexander, J. A. Yorke, Z. You, I. Kan, Riddled basins, International Journal of Bifurcation and Chaos 4 795-813 (1992).


\bibitem{ashwin} P. Ashwin, J. Buescu, I. Stewart, From attractor to chaotic saddle: a tale of transverse instability, Nonlinearity 9 703–737 (1996).


\bibitem{batista} C. Anteneodo, S. E.de Souza Pinto, A. M. Batista, R. Viana, Analytical results for coupled-map lattices with long-range interactions, Phys. rev. A, Atomic, molecular, and optical physics 68(4) (2003).

\bibitem{piko} Arkady Pikovsky, Antonio Politi, Lyapunov Exponents: A Tool to Explore Complex Dynamics, Cambridge U. Press (2016).

\bibitem{cosenza} M. G. Cosenza, O. Alvarez-Llamoza and A. V. Cano, Chimeras and clusters emerging from robust-chaos dynamics, preprint, arxiv.org/abs/2102.06087 (2021).





\bibitem{asta} J. Stoer, R. Bulirsch, Introduction to Numerical Analysis, Springer, New York (2002).

\bibitem{walters} P. Walters, An Introduction to Ergodic Theory, Springer-Verlag, New York (1982).

\end{thebibliography}
\end{document}